\documentclass[11pt,a4paper]{amsart}
\usepackage[margin=2.5cm]{geometry}

\usepackage[foot]{amsaddr}

\usepackage[utf8]{inputenc}
\usepackage{amsmath, amssymb, amsfonts,amsthm,amsopn}
\usepackage{dsfont}
\usepackage{graphicx}

\usepackage{titletoc}
\usepackage[section]{placeins}
\usepackage[shortlabels]{enumitem}
\setlist[enumerate]{itemsep=1em}
\usepackage{mathtools}

\usepackage{upgreek}
\usepackage{xcolor}
\usepackage{thm-restate}
\usepackage{tensor}

\usepackage[breaklinks]{hyperref}
\hypersetup{
    colorlinks,
    linkcolor={red!40!black},
    citecolor={blue!50!black},
    urlcolor={blue!20!black}
}
\usepackage{doi}
\usepackage{orcidlink}

\DeclareMathOperator{\tr}{tr}

\newcommand{\one}[0]{\mathds{1}}
\renewcommand{\a}{\alpha}
\renewcommand{\b}{\beta}


\newcommand{\R}{\mathds{R}}
\newcommand{\C}{\mathds{C}}

\newcommand{\Z}{\mathds{Z}}

\newcommand{\FFF}{\,\tensor[_2]{F}{_1}}


\DeclareMathOperator{\mc}{mc}
\DeclareMathOperator{\surplus}{sp}
\DeclareMathOperator{\qmc}{qmc}

\newtheorem{theorem}{Theorem}
\newtheorem*{theorem*}{Theorem}

\newtheorem{lemma}[theorem]{Lemma}
\newtheorem{definition}[theorem]{Definition}
\newtheorem{corollary}[theorem]{Corollary}

\newtheorem*{problem*}{Problem}
\newtheorem*{question*}{Question}

\newtheorem*{result*}{Result}

\newcommand{\nn}{\nonumber}
\newcommand{\overbar}[1]{\mkern 1.5mu\overline{\mkern-1.5mu#1\mkern-1.5mu}\mkern 1.5mu}

\begin{document}

\title
[Lovász theta and Shearer lower bounds on Quantum Max Cut]
{Lovász theta and Shearer lower bounds \\on Quantum Max Cut}

\date{\today}

\author{Felix Huber}
\address{
Felix Huber,
Division of Quantum Information,
Institute of Informatics,
Faculty of Mathematics, Physics and Informatics,
University of Gdańsk,
Wita Stwosza 57, 80-308 Gdańsk, Poland
}
\email{felix.huber@ug.edu.pl}

\thanks{
We thank Tomás Crosta and Ojas Parekh for fruitful discussions.
FH was funded in whole or in part by the National Science Centre, Poland 2024/54/E/ST2/00451
and by the Polish National Agency for Academic Exchange under the Strategic Partnership Programme grant BNI/PST/2023/1/00013/U/00001.
For the purpose of Open Access,
the author has applied a CC-BY public copyright licence to any Author Accepted Manuscript (AAM) version arising from this submission.
}

\begin{abstract}
Quantum Max Cut is a problem relevant to computer science and many-body quantum physics
due to its links to classical Max Cut and
the anti-ferromagnetic Heisenberg Hamiltonian.
We prove a lower bound to quantum Max Cut of a graph in terms of the Lovász theta function of its complement.
For a graph with $m$ edges,
 $\text{qmc}(G) \geq \tfrac{m}{4}\big(
 1 + \tfrac{8}{3\pi}\tfrac{1}{\vartheta(\overbar{G}) -1}
 \big)$, with the bound achieved by a product state.
The proof can be strenghtened by the vector chromatic number
and extends a result by Balla, Janzer, and Sudakov on classical Max Cut.
A relaxed bound follows from $\vartheta(\overbar{G}) - 1 \leq \Delta$ for graphs with maximum degree $\Delta$,
making it interesting for practically relevant quantum many-body systems.
We also extend results by Carlson et al. and
Shearer and show that $\qmc(G) \geq \frac{m}{4} + \frac{2m^{3/4}}{3 \pi}$ for all triangle-free graphs with $m$ edges.
\end{abstract}

\maketitle
\setcounter{tocdepth}{1}

\section{(Quantum) Max Cut}
Max Cut asks for a partition of the vertex set of a graph, such that the number of edges crossing the two sets is maximized.
In other words,
$\mc(G)$ is the size of the largest cut induced by a subgraph.
It can be modeled as the polynomial optimization problem,
\begin{align}\label{eq:mc}
 \max \quad & \frac{1}{2}\sum_{uv \in E} (1 - z_u z_v) \nn\\
 \text{subject to}\quad & z_u \in \{+1,-1\} \text{ for all } u \in V\,.
\end{align}

Let $\mc(G)$ be the size of the largest cut of $G$. Denote by $m =|E|$. The surplus is defined as
$\surplus(G) = \mc(G) - m/2$,
motivated by the fact that a value of $m/2$
can be obtained by a simple probabilistic method.
Balla, Janzer, and Sudakov~\cite{balla2023maxcutlovaszthetafunction} showed that
 $\surplus(G) \geq \frac{1}{\pi} \frac{m}{\vartheta(\overbar{G}) -1}$,
where $\vartheta(\overbar{G})$ is the Lovász theta function~\cite{1055985} of the complement graph $\overbar{G}$.
Equivalently,
\begin{equation}\label{eq:bound_mc}
 \mc(G) \geq \frac{m}{2}\Big(1 +
 \frac{2}{\pi} \frac{1}{\vartheta(\overbar{G}) -1}\Big)\,.
\end{equation}

The result originates in the following formulation of the Lovász theta function.
\begin{definition}[\cite{10.1145/274787.274791}]\label{def:Lovasz}
The Lovász theta function of a graph $G$, denoted $\vartheta(G)$, is the minimal $\kappa \geq 2$ for which there exists a unit vector $x_v$ (in some Euclidean space) for every vertex $v$ such that $\langle x_u, x_v \rangle = - \tfrac{1}{\kappa-1}$ holds whenever $u$ and $v$ are distinct vertices in $G$ and $uv \not \in E(G)$.
\end{definition}
Without restriction of generality, the vectors $x_u$ can be chosen to lie in $\R^n$,
as $n$ vectors span an at most $n$-dimensional space.
The definition of $\vartheta(\overbar{G})$ has
$uv \not \in E(G)$ replaced by $uv \in E(G)$.

{\it Quantum Max Cut} (QMC) is a
generalization of Max Cut [Eq.~\eqref{eq:mc}] to operators,
and asks for the largest eigenvalue of the Hamiltonian
\begin{equation}\label{eq:qmc}
 H^{\qmc} = \frac{1}{4} \sum_{uv \in E} (I - X_u X_v - Y_u Y_v - Z_u Z_v) \,.
\end{equation}
Here $X_u,Y_u,Z_u$ are the tensor product operators acting with the Pauli matrices $X,Y,Z$ on vertex $u$ and with identity on the remaining vertices.
Denote by $\qmc(G)$ the {largest} eigenvalue of $H^{\qmc}$.
The study of $H^{\qmc}$ is also physically motivated. Quantum magnetic systems can be modeled by the anti-ferromagnetic Heisenberg Hamiltonian $H^{\text{H}} = - H^{\qmc}$,
and one is interested in the ground state energy, i.e. its smallest eigenvalue.
Naturally, the task of finding the smallest eigenvalue of
$H^{\text{H}}$ is equivalent to finding the largest eigenvalue of $H^{\qmc}$.
We restrict ourselves to the maximization problem $H^{\qmc}$.
A range of approximation algorithms have been designed for $H^{\text{qmc}}$
that use the moment-SOS hierarchy~
\cite{
parekh2022optimal,
King2023improved,
lee2024improved,
huber2024secondorderconerelaxations,
gribling2025improvedapproximationratiosquantum,
apte2025improvedalgorithmsquantummaxcut}.
A trivial lower bound of ${m}/{4}$ on $\qmc$ is obtained by the maximally mixed state.

\subsection{Notation}
We will use both $\langle u, v \rangle$ and $u\cdot v$
for the scalar product of two vectors.
The identity matrix is $\one$ and the three Pauli matrices are
\begin{equation}
 \sigma_x = \begin{pmatrix}
             0 & 1 \\ 1 & 0
            \end{pmatrix}\,,\quad
 \sigma_y = \begin{pmatrix}
             0 & -i \\ i & 0
            \end{pmatrix}\,,\quad
\sigma_z =  \begin{pmatrix}
             1 & 0 \\ 0 & -1
            \end{pmatrix}\,.
\end{equation}
To avoid a proliferation of indices,
we sometimes write
$X$ for $\sigma_x$, $Y$ for $\sigma_y$, and $Z$ for $\sigma_z$.
The operator $P_u$ denotes the tensor product operator
acting with $P$ on vertex $u$ and with identity $\one$ on the remaining vertices.

\section{Briët-de Oliveira Filho-Vallentin rounding}
A key ingredient for the Lovász and Shearer product state lower bounds is the following:

\begin{lemma}[\cite{v010a004}, Lemma  2.1]
\label{lem:BdOFV}
Let $u, v$ be unit vectors in $\R^n$ and let
$Z \in \R^{r \times n}$
be a random matrix whose entries are distributed independently according to the standard normal
distribution with mean $0$ and variance $1$. Then,
\begin{equation}
\label{eq:BdOFV}
 \mathds{E}\Big[\frac{Zu}{||Zu||} \cdot \frac{Zv}{||Zv||}\Big]
 = \frac{2}{r}\Bigg(
 \frac{\Gamma((r+1)/2)}
 {\Gamma(r/2)}
 \Bigg)^2
\hat F(r,u  \cdot v)\,.
 \end{equation}
Here
\begin{align}\label{eq:BdOFV_F}
\hat F(r,u  \cdot v) =
 (u  \cdot v)\,\  &\FFF
 \Bigg( {1/2 ,1/2 \atop  r/2+1}
 ;(u\cdot v)^2
 \Bigg) \nn\\
 = \,\,&\sum_{k=0}^\infty
 \frac{
 (1\cdot 3 \cdots (2k-1))^2}
 {(2\cdot 4 \cdots 2k)
  ((r+2)\cdot (r+4) \cdots (r+2k))
  }
  (u\cdot v)^{2k+1}\,,
\end{align}
with \,$\FFF$ the hypergeometric function.
\end{lemma}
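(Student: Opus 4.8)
The plan is to exploit the Gaussian structure of the pair $(Zu,Zv)$ and collapse the whole computation onto a single bivariate Gaussian integral. First I would observe that, since the rows of $Z$ are independent and $u,v$ are unit vectors, each of $Zu$ and $Zv$ is distributed as $N(0,\one_r)$, while the $r$ coordinate pairs $\big((Zu)_i,(Zv)_i\big)$ are independent and identically distributed, each a centered bivariate Gaussian with unit variances and correlation $\rho := u\cdot v$; indeed the cross-covariance is $\mathds{E}[(Zu)_i(Zv)_j]=\delta_{ij}\,\rho$. Consequently the quantity sought depends only on $\rho$ and $r$.

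Next I would linearize the two normalizations via the Gamma integral $\|w\|^{-1}=\pi^{-1/2}\int_0^\infty s^{-1/2}e^{-s\|w\|^2}\,ds$, applied to $\|Zu\|$ and $\|Zv\|$. After interchanging expectation and integration (justified by Tonelli once the sign of $\rho$ is pulled out, leaving nonnegative integrands) the target becomes, using $Zu\cdot Zv=\sum_i (Zu)_i(Zv)_i$ and the i.i.d.\ structure of the coordinate pairs,
\[
\mathds{E}\Big[\tfrac{Zu}{\|Zu\|}\cdot\tfrac{Zv}{\|Zv\|}\Big]=\frac{1}{\pi}\int_0^\infty\!\!\int_0^\infty (st)^{-1/2}\, r\,B(s,t)\,A(s,t)^{r-1}\,ds\,dt,
\]
where $A(s,t)=\mathds{E}[e^{-sg^2-th^2}]$ and $B(s,t)=\mathds{E}[gh\,e^{-sg^2-th^2}]$ are evaluated for a single correlated pair $(g,h)$. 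These are elementary Gaussian integrals: completing the square in the $2\times2$ precision matrix $\Sigma^{-1}+2\,\mathrm{diag}(s,t)$ gives $A(s,t)=Q^{-1/2}$ and $B(s,t)=\rho\,Q^{-3/2}$ with $Q=(1+2s)(1+2t)-4st\,\rho^2$, so that $B\,A^{r-1}=\rho\,Q^{-(r+2)/2}$.

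It then remains to evaluate $\tfrac{r\rho}{\pi}\int\!\!\int (st)^{-1/2}Q^{-(r+2)/2}\,ds\,dt$ and identify it with the stated hypergeometric series. Here I would factor $Q=(1+2s)(1+2t)\,(1-w\rho^2)$ with $w=4st/[(1+2s)(1+2t)]\in[0,1)$, expand $(1-w\rho^2)^{-(r+2)/2}=\sum_k \tfrac{((r+2)/2)_k}{k!}(w\rho^2)^k$ as a Pochhammer power series, and integrate term by term; the $k$-th integrand is separable and splits into two identical one-dimensional integrals $\int_0^\infty s^{k-1/2}(1+2s)^{-(k+(r+2)/2)}\,ds$, each a Beta function $B\big(k+\tfrac12,\tfrac{r+1}{2}\big)$. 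The bulk of the work, and the step most prone to arithmetic slips, is the final bookkeeping: using $\Gamma(k+\tfrac12)=\sqrt{\pi}\,(1/2)_k$ together with $\Gamma(k+\tfrac{r}{2}+1)=(r/2+1)_k\,\tfrac{r}{2}\,\Gamma(r/2)$ to show that the coefficient of $\rho^{2k+1}$ collapses to $\tfrac{2}{r}\big(\Gamma(\tfrac{r+1}{2})/\Gamma(\tfrac{r}{2})\big)^2\tfrac{(1/2)_k^2}{(r/2+1)_k\,k!}$. Recognizing $\sum_k \tfrac{(1/2)_k^2}{(r/2+1)_k\,k!}\rho^{2k+1}=\rho\;\FFF\big({1/2,1/2\atop r/2+1};\rho^2\big)=\hat F(r,u\cdot v)$ yields exactly $\tfrac{2}{r}\big(\Gamma(\tfrac{r+1}{2})/\Gamma(\tfrac{r}{2})\big)^2\hat F(r,u\cdot v)$, completing the proof.
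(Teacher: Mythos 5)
Your proof is correct, but note that the paper does not actually prove this lemma at all --- it imports it verbatim by citation from Bri\"et--de Oliveira Filho--Vallentin, so there is no internal proof to compare against; what you have supplied is a self-contained derivation of the cited result. I checked the computation: the reduction to $r$ i.i.d.\ correlated Gaussian pairs is right; the Gamma-integral linearization $\|w\|^{-1}=\pi^{-1/2}\int_0^\infty s^{-1/2}e^{-s\|w\|^2}\,ds$ is valid; the Gaussian integrals do give $A=Q^{-1/2}$ and $B=\rho\,Q^{-3/2}$ with $Q=(1+2s)(1+2t)-4st\rho^2$ (via $\det M=Q/(1-\rho^2)$ and $(M^{-1})_{12}=\rho/((1-\rho^2)\det M)$); and the bookkeeping closes: each one-dimensional integral is $2^{-(k+1/2)}B\bigl(k+\tfrac12,\tfrac{r+1}{2}\bigr)$, the factor $4^k\cdot 2^{-2k-1}=\tfrac12$ absorbs the powers of two, and with $((r+2)/2)_k=(r/2+1)_k$ the coefficient of $\rho^{2k+1}$ collapses to $\tfrac{2}{r}\bigl(\Gamma(\tfrac{r+1}{2})/\Gamma(\tfrac{r}{2})\bigr)^2\tfrac{(1/2)_k^2}{(r/2+1)_k\,k!}$, matching $\hat F$. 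Two sanity checks confirm the constants: $r=1$ recovers $\tfrac{2}{\pi}\arcsin(u\cdot v)$ via $\rho\,\FFF\bigl({1/2,1/2\atop 3/2};\rho^2\bigr)=\arcsin\rho$, and $r=3$ gives $\tfrac{8}{3\pi}$, consistent with Eq.~\eqref{eq:Gamma_value}.

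The one imprecise step is your justification of the interchange of $\mathds{E}$ with the $(s,t)$-integration. Tonelli does not apply directly: the random integrand $(st)^{-1/2}\,(Zu\cdot Zv)\,e^{-s\|Zu\|^2-t\|Zv\|^2}$ is not sign-definite for any fixed $\rho$ (the product $(Zu)_i(Zv)_i$ fluctuates in sign), and "pulling out the sign of $\rho$" only makes sense \emph{after} the swap, once the expectation has produced $\rho\,Q^{-(r+2)/2}$ --- so as written the argument is circular. The fix is immediate, though: by Cauchy--Schwarz $|Zu\cdot Zv|\leq\|Zu\|\,\|Zv\|$, so the absolute integrand is dominated by $(st)^{-1/2}e^{-s\|Zu\|^2-t\|Zv\|^2}\,\|Zu\|\,\|Zv\|$, whose double integral equals $\pi$ pointwise, giving total mass at most $\pi\cdot\mathds{E}[1]<\infty$ and licensing Fubini. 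Tonelli is then correctly invoked only at the later step, for the term-by-term integration of the binomial series $(1-w\rho^2)^{-(r+2)/2}=\sum_k\tfrac{((r+2)/2)_k}{k!}(w\rho^2)^k$, where all terms are genuinely nonnegative. With that repair the proof is complete.
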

Setting $r=1$
and
 $\arcsin(x) = x\! \cdot \!\FFF(\tfrac{1}{2},\tfrac{1}{2},\tfrac{3}{2}; x^2)$
recovers Grothendieck's identity~\cite{AIF_1952__4__73_0},
\begin{equation}\label{eq:Grothendieck}
\mathds{E}\big[\operatorname{sign}(Zu) \cdot \operatorname{sign}(Zv)\big]
= \frac{2}{\pi} \arcsin(u\cdot v)\,.
\end{equation}

Let $G$ be a graph and $\overbar{G}$ its complement.
By Definition~\ref{def:Lovasz}
there exists a vector $x_u$ for every vertex $u$
such that
$\langle x_u, x_v \rangle = - 1/(\vartheta(\overbar{G}) -1)$
on every edge $uv\in E(G)$.
This motivates the following.
\begin{lemma}\label{lem:bound_F}
If $\langle x_u, x_v\rangle \leq 0$, then
\begin{equation}
 \hat F(r, \langle x_u ,x_v\rangle) \leq \langle x_u , x_v\rangle \,.
\end{equation}
\end{lemma}
\begin{proof}
 Inspect the Taylor series in Eq.~\eqref{eq:BdOFV_F}.
 For the given parameters,
 each coefficient in $\FFF$ is strictly positive with the zero'th coefficient equal to one.
 Thus $\FFF \geq 1$.
 Furthermore, every power in $\hat F(r, \langle u, v \rangle)$ is odd, from which the claim follows.
\end{proof}

\section{A Lovász theta lower bound}

\begin{theorem}\label{prop:lower_bound}
Let $G$ be a graph with $m$ edges.
Then
\begin{equation}\label{eq:lower_bound_qmc}
 \qmc(G)
 \geq
 \frac{m}{4}\bigg(
 1 + \frac{8}{3\pi}\frac{1}{\vartheta(\overbar{G}) -1}
 \bigg)\,.
\end{equation}
The bound can be obtained with product states.
\end{theorem}
\begin{proof}
For a given graph $G$, let $\{x_u \in \R^n\,:\, u \in V(G)\}$
be vectors that realize the Lovász theta function
$\vartheta(\overbar{G})$
of its complement $\overbar{G}$.
Then on every edge $uv \in E(G)$,
\begin{equation}\label{eq:Lovasz_vector}
\langle x_u, x_v\rangle = - \frac{1}{\vartheta(\overbar{G})-1} \,.
\end{equation}

Following the product state rounding by Gharibian and Parekh~\cite{gharibian_et_al:LIPIcs:2019:11246},
choose a random matrix $Z \in \R^{3 \times n}$
whose entries are distributed independently according to the standard normal distribution with mean $0$ and variance $1$. Let
\begin{equation}\label{eq:round}
 y_u = \frac{Zx_u}{||Zx_u||}\,.
\end{equation}
For each vertex $u$ define the one-qubit state
\begin{equation}
 \varrho_u = \frac{1}{2} \big(\one + y_u^1 \sigma_x + y_u^2 \sigma_y + y_u^3 \sigma_z\big) \,.
\end{equation}
where $y_u^1,y_u^2,y_u^3$ are the three components of $y_u$.
Note that
$\tr(\varrho)=1$
and
$\varrho \succeq 0$
due to
$\langle y_u, y_u \rangle = 1$,
so that the coefficients of $\sigma_x, \sigma_y, \sigma_z$
lie on the Bloch sphere.
Let
\begin{equation}
 \varrho^{\vartheta} = \prod_{u \in V(G)} \varrho_u\,.
\end{equation}
Then
\begin{align}\label{eq:ineq}
 \mathds{E}\big[\tr(H^{{\qmc}}\varrho^{\vartheta})\big]
 &=
 \frac{1}{4}\sum_{uv \in E} \big(
 1 - \mathds{E}[y_u \cdot y_v]
 \big) \nn\\
 &=  \frac{1}{4}\sum_{uv \in E} \Big(
 1 -
 \frac{2}{r}\Bigg(
 \frac{\Gamma((r+1)/2)}
 {\Gamma(r/2)}
 \Bigg)^2
 \hat F(r, \langle x_u, x_v\rangle)
 \Big)
 &
 \text{[Lemma~\ref{lem:BdOFV}]}
 \nn\\
 &\geq\frac{1}{4}
 \sum_{uv \in E} \Big(
 1 -
 \frac{8}{3\pi} \langle x_u ,x_v \rangle
 \Big)
 &
 \text{[Lemma~\ref{lem:bound_F}; Eq.~\eqref{eq:Gamma_value}]}\nn\\
 &=\frac{1}{4}
 \sum_{uv \in E} \Big(
 1 +
 \frac{8}{3\pi} \frac{1}{\vartheta(\overbar{G}) - 1)}
 \Big)\,.
  &
 \text{[Definition~\ref{def:Lovasz}]}
\end{align}
Above, we have used Definition~\ref{def:Lovasz},
Lemma \ref{lem:BdOFV} and Eq.~\ref{lem:bound_F} with $r=3$,
and the below Eq.~\eqref{eq:Gamma_value},
\begin{equation}\label{eq:Gamma_value}
 \frac{2}{3}\Bigg(
 \frac{\Gamma((3+1)/2)}
 {\Gamma(3/2)}
 \Bigg)^2
 = \frac{8}{3\pi}\,.
\end{equation}
Eq.~\eqref{eq:ineq} is the expected value of
$\tr(H^{{\qmc}}\varrho^{\vartheta})$
after having obtained $\varrho^{\vartheta}$ by rounding with a random matrix $Z$.
Thus there exists at least one state achieving the bound.
It follows that
\begin{equation}
 \qmc(G)
 \geq
 \mathds{E}\big[\tr(H^{{\qmc}}\varrho^{\vartheta})\big]
 \geq
 \frac{m}{4}\bigg(
 1 + \frac{8}{3\pi}\frac{1}{\vartheta(\overbar{G}) - 1}
 \bigg)\,.
\end{equation}
This ends the proof.
\end{proof}
The classical bound by Balla, Janzer, and Sudakov is obtained with setting $r=1$, and the fact that $\arcsin(x) \leq x$ for all $x\leq 0$,
which is a special case of Lemma~\ref{lem:bound_F}.
The rounding to quantum states outperforms the rounding to classical states when applied to $H^{\qmc}$
due to $0.8488 \approx \tfrac{8}{3\pi} > \tfrac{2}{\pi}\approx 0.6366$.
Ref.~\cite{balla2023maxcutlovaszthetafunction}
also showed that $\vartheta(\overbar{G}) \leq \Delta + 1$,
where $\Delta$ is the maximum degree of the graph.
The relaxed bound
$\text{qmc}(G) \geq \tfrac{m}{4}\big(
 1 + \tfrac{8}{3\pi}\tfrac{1}{\Delta}
 \big)$
is interesting for bounded-degree graphs that are relevant in quantum physics,
e.g. lattices with defects.

Similar to the classical bound, a slightly stronger result on QMC
can be made by replacing $\vartheta(\overbar{G})$ in
 Eq.~\eqref{eq:lower_bound_qmc}
by the vector chromatic number $\chi_\text{vec}(G)$.
One uses the fact that
$
 \chi_\text{vec}(G) \leq \vartheta(\overbar{G}) \,,
$
which follows from:
\begin{definition}[\cite{10.1145/274787.274791}]
The vector chromatic number of a graph $G$, denoted $\chi_\text{vec}(G)$, is the minimal $\kappa \geq 2$ for which there exists a unit vector $x_v$ (in some Euclidean space) for every vertex $v$ such that $\langle x_u, x_v \rangle \leq - \tfrac{1}{\kappa-1}$ holds whenever $u$ and $v$ are distinct vertices in $G$ and $uv \in E(G)$.
\end{definition}
Note the inequality $\leq$ compared to equality $=$ in Definition~\ref{def:Lovasz} for the Lovász theta function.

The method naturally applies also to Hamiltonians that contain only some of the QMC terms, e.g.,
$ H^{\text{XX}} = \tfrac{1}{4} \sum_{uv \in E} (I - X_u X_v - Y_u Y_v)$.
Then $r=2$ in $\hat F(r, u  \cdot v)$
and
\begin{equation}\label{eq:lower_bound_XX}
 \qmc^{XX}(G)
 \geq
 \frac{m}{4}\bigg(
 1 + \frac{\pi}{4}\frac{1}{\vartheta(\overbar{G}) -1}
 \bigg)\,.
\end{equation}

\section{A Shearer lower bound}
The same rounding strategy applies to other approximations that rely on concrete
Ansatzes for the SDP relaxation of classical Max Cut,
such as found in Refs.~\cite{
doi:10.1137/1.9781611977066.17, carlson2020lowerboundsmaxcuthfree}.
The idea is that
\begin{align}\label{eq:mc_relax}
 \max \quad & \frac{1}{2} \sum_{(uv) \in E} (1-Z_{uv})  \nn\\
 \text{subject to} \quad & Z_{uu} = 1  \quad \text{ for all }u \in V \,, \nn\\
 & Z \succeq 0\,,
\end{align}
relaxes Eq.~\eqref {eq:mc}.
Any feasible point to the SDP~\eqref{eq:mc_relax}
can then be rounded to a classical solution via a Cholesky decomposition
and rounding of the resulting vectors.
While in the previous Section we inferred a feasible set of vectors from the Lovász number,
{\it any} feasible point of the SDP~\eqref{eq:mc_relax} can be rounded.

For example, let us consider
Ref.~\cite{carlson2020lowerboundsmaxcuthfree}.
Its Theorem $1.1$
implies Shearer's bound on classical Max Cut, which states that for all triangle-free graphs~\cite[Corollary 1]{https://doi.org/10.1002/rsa.3240030211},
\begin{equation}\label{eq:shearer_classical}
 \mc(G) \geq \frac{m}{2} + \frac{1}{8\sqrt{2}} \sum_{v\in V} \sqrt{d_v}\,.
\end{equation}
Here $d_v$ are the degrees of the vertices $v$. From Eq.~\eqref{eq:shearer_classical}
follows that for all triangle-free graphs~\cite[Corollary 2]{https://doi.org/10.1002/rsa.3240030211},
\begin{equation}\label{eq:shearer_classical2}
 \mc(G) \geq \frac{m}{2} + \frac{1}{8\sqrt{2}} m^{3/4}\,.
\end{equation}

To obtain Shearer-type bounds for quantum Max Cut, we need to extend the following inequalities that hold for
$\arcsin$ to the function $\hat F$ from Eq.~\eqref{eq:BdOFV_F}:
\begin{alignat}{2}
 \arcsin(\a) &\leq \arcsin(\b) \quad &&\text{for } \a \leq \b\,, \nn\\
 \arcsin(x) &\leq x                      &&\text{for } x \in [-1,0]\,,\nn\\
 \arcsin(x) &\leq \tfrac{\pi}{2} x \quad &&\text{for } x \in [0,1]\,,
 \nn\\
 \arcsin(\a-\b) &\leq \tfrac{\pi}{2}\a - \b
 &&\text{for } \a,\b \in [0,1]\,.
\end{alignat}
We will state them in terms of inner products $\langle x_u, x_v\rangle$ of unit vectors whenever appropriate.

\begin{lemma}\label{lem:F}
The following relations hold.
\begin{enumerate}[label=(\roman*)]
 \item
 Let $\alpha\leq \beta$. Then $\hat F(r,\alpha) \leq \hat F(r,\beta)$.

 \item
If $\langle x_u, x_v\rangle \leq 0$, then
\begin{equation}
 \hat F(r, \langle x_u ,x_v\rangle) \leq \langle x_u , x_v\rangle \,.
\end{equation}

 \item If $\langle x_u, x_v \rangle \geq 0$, then
\begin{equation}
 \hat F(r, \langle x_u ,x_v\rangle) \leq
 \frac{\Gamma(c)\Gamma(c-a-b)}{\Gamma(c-a) \Gamma(c-b)}
 \,\langle x_u ,x_v\rangle\,,
\end{equation}
where $a=b=\tfrac{1}{2}$ and $c = \tfrac{r}{2}+1$.

\item
Let $\a,\b \in [0,1]$ be non-negative. Then,
 \begin{equation}
 \hat F(3, \a-\b) \leq \frac{3\pi}{8}\a - \b\,.
 \end{equation}

\end{enumerate}
\end{lemma}

\begin{proof}
Let us show items (i) - (iv).
\begin{enumerate}[label=(\roman*)]

 \item
 This is seen by term-wise comparison of the Taylor series [Eq.~\eqref{eq:BdOFV_F}].

 \item This was shown in Lemma~\ref{lem:bound_F}.

 \item
 For $\operatorname{Re}(c-a-b) > 0$, the Gauss summation formula holds,
 \begin{equation}
 \FFF( {a ,b, c} ; 1) = \frac{\Gamma(c)\Gamma(c-a-b)}{\Gamma(c-a) \Gamma(c-b)}\,.
\end{equation}
Then note that for all $t \in [0,1]$,
\begin{equation}
\FFF( {a ,b, c} ; t ) \leq \FFF( {a ,b, c} ; 1)\,,
\end{equation}
which can be seen from the Taylor series in Eq.~\eqref{eq:2F1_taylor}.
As a consequence,
\begin{align}
 \hat F(r, \langle x_u ,x_v\rangle)
 & \leq \frac{\Gamma(c)\Gamma(c-a-b)}{\Gamma(c-a) \Gamma(c-b)}
 \, \langle x_u ,x_v\rangle\,.
\end{align}

\item
 The first case $\a-\b \leq 0$ follows from Item (ii) (respectively Lemma~\ref{lem:bound_F}),
 \begin{equation}
  \hat F(3, \a-\b) \leq (\a-\b) \leq \frac{3\pi}{8}\a - \b\,.
 \end{equation}
 The second case $\a-\b \geq 0$ follows from Item (iii),
 \begin{equation}
  \hat F(3, \a-\b) \leq \frac{3\pi}{8}(\a-\b) \leq \frac{3\pi}{8}\a - \b\,,
 \end{equation}
 where we used that for $a=b=1/2$ and $c=3/2+1$,
\begin{equation}
 \frac{\Gamma(c)\Gamma(c-a-b)}{\Gamma(c-a) \Gamma(c-b)} = \frac{3\pi}{8}\,.
\end{equation}

\end{enumerate}
This ends the proof.
\end{proof}

Note that the above relations reduce to those for $\arcsin$ in the interval $[-1,1]$.
In particular,
for $r=1$, Item (i) reduces to $\arcsin(x) \leq \tfrac{\pi}{2}x$ for $x\in [0,1]$.
For $r=3$, it reads
\begin{equation}
 \hat F(3, \langle x_u ,x_v\rangle) \leq \frac{3\pi}{8} \langle x_u ,x_v\rangle\,.
\end{equation}

We can now show the bound corresponding to that of
Ref.~\cite[Theorem 1.1]{
carlson2020lowerboundsmaxcuthfree}:
\begin{theorem}\label{thm:carlson_quantum}
 For every $v\in V$, let $V_v$ be any subset of neighbors of vertex $v$ and $\epsilon_v \leq \frac{1}{\sqrt{|V_v|}}$. Then,
\begin{equation}\label{eq:carlson}
 \qmc(G) \geq \frac{m}{4}
 + \sum_{v \in V} \frac{2\epsilon_v|V_v|}{3\pi}
 - \sum_{vw \in E} \frac{\epsilon_v \epsilon_w|V_v \cap V_w| }{4}\,,
\end{equation}
with the bound obtained by a product state.
\end{theorem}
\begin{proof}

We follow Ref.~\cite{
carlson2020lowerboundsmaxcuthfree}.
Construct a set of vectors $\tilde x_v$ with entries indexed by~$w$ as
\begin{equation}
 (\tilde  x_v)_w = \begin{cases}
        1 \quad \quad \text{if } v=w
        \\
        - \epsilon_i \quad \text{if } w \in V_v
        \\
        0 \quad\quad \text{otherwise}
       \end{cases}
\end{equation}
and normalize $x_v = \tilde x_v / ||\tilde x_v||$.
Then~\cite{carlson2020lowerboundsmaxcuthfree},
\begin{equation}
 \langle x_v,  x_w \rangle \leq
 -\frac{\epsilon_v}{4}\one_{V_w}(v)
 -\frac{\epsilon_w}{4}\one_{V_v}(w)
 + |V_v \cap V_w| \epsilon_v \epsilon_w\,.
\end{equation}
where $\one_{V_w}(v) = 1$ if $v \in V_w$ and $0$ otherwise.

Now perform a rounding as in Theorem~\ref{prop:lower_bound},
to obtain $y_u = Zx_u / ||Zx_u||$
and a corresponding product state $\varrho^C$.
We use the derived relations on $\hat F$ to bound its expectation value.
\begin{align}
 \mathds{E}\big[\tr(H^{{\qmc}}\varrho^{C})\big]
 &=
 \frac{1}{4}\sum_{vw \in E} \big(
 1 - \mathds{E}[y_v \cdot y_w]
 \big) \nn\\
 &=  \frac{1}{4}\sum_{uv \in E} \Big(
 1 -
 \frac{2}{3}\Bigg(
 \frac{\Gamma((3+1)/2)}
 {\Gamma(3/2)}
 \Bigg)^2
 \hat F(3, \langle x_u, x_v\rangle)
 \Big)
 &
 \text{[Lemma~\ref{lem:BdOFV}]}
 \nn\\
 &\geq\frac{1}{4}
 \sum_{uv \in E} \big(
 1 - \frac{8}{3\pi}
 \hat F\Big(3,
 |V_v \cap V_w| \epsilon_v \epsilon_w
 -\frac{\epsilon_v}{4}\one_{V_w}(v)
 -\frac{\epsilon_w}{4}\one_{V_v}(w)
 \Big)
 &
 \text{[Lemma~\ref{lem:F}, (i);  Eq.~\eqref{eq:Gamma_value}]}\nn\\
 &\geq\frac{1}{4}
 \sum_{uv \in E} \bigg(
 1 - \frac{8}{3\pi} \bigg[
 \frac{3\pi}{8}
 |V_v \cap V_w| \epsilon_v \epsilon_w
 - \frac{\epsilon_v}{4}\one_{V_w}(v)
 - \frac{\epsilon_w}{4}\one_{V_v}(w) \bigg]
 \bigg)
 &\text{[Lemma~\ref{lem:F}, (iv)]}\nn\\
 &= \frac{m}{4}
 + \sum_{v \in V} \frac{2\epsilon_v|V_v|}{3\pi}
 - \sum_{vw \in E} \frac{\epsilon_v \epsilon_w|V_v \cap V_w| }{4}\,.
\end{align}
This ends the proof.
\end{proof}

In turn, this implies a quantum analogue of
Shearer's bound~\cite{https://doi.org/10.1002/rsa.3240030211} in Eq.~\eqref{eq:shearer_classical}:
\begin{corollary}\label{cor:qshearer1}
 Let $G$ be a triangle-free graph
 with vertex degrees $d_v$. Then,
\begin{equation}
 \qmc(G) \geq \frac{m}{4} + \frac{2}{3\pi}\sum_{v\in V} \sqrt{d_v}  \,,
\end{equation}
with the bound obtained by a product state.
\end{corollary}
\begin{proof}
 In Theorem~\ref{thm:carlson_quantum} set $V_v$ to the neighbors of $v$ and $\epsilon_v = \tfrac{1}{\sqrt{d_v}}$ for all
 $v \in V$. Since $G$ is a triangle-free graph, $|V_v \cap V_w| = 0$ for all $(vw) \in E$. This ends the proof.
\end{proof}
Ref.~\cite[Lemma~2]{https://doi.org/10.1002/rsa.3240030211}
states that for all graphs with $m$ edges,
\begin{equation}
  \sum_{v\in V} \sqrt{d_v} \geq m^{3/4}\,.
\end{equation}
Thus Corollary~\ref{cor:qshearer1}
can be relaxed to:
\begin{corollary}\label{cor:qshearer2}
Let $G$ be a triangle-free graph. Then,
 \begin{equation}
 \qmc(G) \geq \frac{1}{4}m + \frac{2}{3 \pi}m^{3/4}\,.
\end{equation}
with the bound obtained by a product state.
\end{corollary}
Corollary~\ref{cor:qshearer2} can be seen as the quantum analogue
of Shearer's bound on classical Max Cut
in Eq.~\eqref{eq:shearer_classical2}~\cite[Corollary 2]{https://doi.org/10.1002/rsa.3240030211}.

\appendix
\section{Product state rounding}
The proof of Theorem~\ref{prop:lower_bound} is inspired by
the classical result of Balla, Janzer, and Sudakov and
by the randomized rounding method of
Briët, de Oliveira Filho, and Vallentin,
which was used by Gharibian and Parekh to obtain
a product state approximation ratio for quantum Max Cut~\cite{gharibian_et_al:LIPIcs:2019:11246}.
Balla, Janzer, and Sudakov corresponds to the case of $r=1$,
which is an instance of the Goemans-Williamson randomized
rounding strategy~\cite{10.1145/227683.227684}.
We sketch the key idea for obtaining quantum product states,
which uses $r=3$.

Index a $3n\times 3n$ real moment matrix $M$
by the Cartesian product of
qubit position $i \in \{1, \dots, n\}$ and choice of
Pauli matrix $k\in\{X,Y,Z\}$. E.g. the index $iX$ corresponds to the Pauli matrix $X$ acting on qubit $i$, that is $\sigma_x^{(i)}$.
Given a state $\varrho$,
form a moment matrix with entries
\begin{equation}
M(ik,jl) = \tr\big(\varrho \, \sigma_k^{(i)} \sigma_l^{(j)}\big) \,.
\end{equation}
Then $M(ik,jl) = M(jl,ik)$ for all $i\neq j$ as Pauli matrices acting on different subsystems commute.
But on the same coordinate two different Pauli matrices anti-commute,
and so
$M(ik,il) = - M(il,ik)$.
By construction, $M\succeq 0$, and the ground state energy
$\tr(H^{\qmc}\varrho)$ can be extracted from $M$ as a linear combination of its entries.

The idea is then to relax the optimization over states $\varrho$
to an optimization over all positive-semidefinite matrices satisfying the above constraints.
The objective function is invariant under a global transpose of $M$ and any imaginary entries (corresponding to anti-commuting pairs of Pauli matrices) can be set to zero.

A Cholesky decomposition of the optimal $M$ results in $3n$ unit vectors $v_{ik} \in \R^{3n}$ so that
$M(ik,jl) = v_{ik}^T v_{jl}$.
Now stack the vectors belonging to a single vertex,
\begin{equation}
u_i =  \begin{pmatrix}
  v_{iX} \\ v_{iY} \\ v_{iZ}
 \end{pmatrix},
\end{equation}
and normalize,
$
x_i = u_i \,/\, ||u_i||\,.
$
Round to vectors in $\R^3$ with a random matrix $R \in \R^{3 \times 3n}$,
where $R$ is a random matrix whose entries are distributed independently according to the standard normal distribution with mean $0$ and variance $1$,
$
y_i = {Rx_i}/{||Rx_i||}\,.
$
Define the product state $\varrho^{\text{GP}} = \prod_{i=1}^n \varrho_i$, where
$\varrho_i = \frac{1}{2} \big(\one + y_i^x \sigma_x + y_i^y \sigma_y + y_i^z \sigma_z\big)$.
Using the rounding scheme by Briët, de Oliveira Filho, and Vallentin [Lemma~\ref{lem:BdOFV}],
Gharibian and Parekh \cite{gharibian_et_al:LIPIcs:2019:11246}
showed that this randomized rounding approach
achieves an approximation ratio of
\begin{equation}\label{eq:taylor}
 \frac{\mathds{E}[\tr(H^{\qmc} \varrho^{\text{GP}})]}{\qmc(G)} \geq 0.498\,.
\end{equation}

\section{Hypergeometric function}
The function $\FFF$ is for $|z|<1$ defined by
\begin{equation}\label{eq:2F1_taylor}
 \FFF
 ( {a ,b, c} ; z)
 =
 \sum_{j=0}^\infty \frac{(a)_j (b)_j}{(c)_j} \frac{z^j}{j!}\,.
\end{equation}
where the Pochhammer symbol or raising factorial is for $a\in \C$ and $n \in \Z$ defined as
\begin{align}
(a)_0 &= 1\nn\\
(a)_n &= a(a+1)\dots (a+n-1) &&n=1,2\dots \nn\\
(a)_n &= \frac{1}{(a-n) \dots (a-1)} &&n=\dots, -2,-1\,.
\end{align}
It is clear that the zero'th term of $\FFF$ evaluates to $1$.

For $\operatorname{Re}(c-a-b) > 0$
there exits an analytic continuation at $z=1$.
The Gauss summation formula states that,
\begin{equation}
 \FFF( {a ,b, c} ; 1 ) = \frac{\Gamma(c)\Gamma(c-a-b)}{\Gamma(c-a) \Gamma(c-b)}\,.
\end{equation}

\bibliographystyle{amsalpha}
\bibliography{current_bib}

\end{document}